\def\BibTeX{{\rm B\kern-.05em{\sc i\kern-.025em b}\kern-.08em
    T\kern-0.1667em\lower.7ex\hbox{E}\kern-0.125emX}}
    \newtheorem{theorem}{Theorem}[section]
\newcommand{\Id}{\mathbbm{1}} 
\newcommand{\dd}{\ \mathrm{d}}  
\newcommand{\E}{\mathbb{E}} 
\newcommand{\I}{\mathbb{I}} 
\newcommand{\Cov}{\mathbb{C}\mathrm{ov}}
\newcommand{\Prob}{\mathbb{P}} 
\newcommand{\R}{\mathbb{R}} 
\newcommand{\N}{\mathbb{N}} 
\newcommand{\sgn}{\mathrm{sgn}} 
\newcommand*\pFq[6][8]{%
  \begingroup 
  \pFqmuskip=#1mu\relax
  \mathchardef\normalcomma=\mathcode`,
  \mathcode`\,=\string"8000
  \begingroup\lccode`\~=`\,
  \lowercase{\endgroup\let~}\pFqcomma
  F^{#2}_{#3}{\left[\genfrac..{0pt}{}{#4}{#5};#6\right]}%
  \endgroup
}
\newcommand{\pFqcomma}{{\normalcomma}\mskip\pFqmuskip}
\newcommand{\msg}{X}    
\newcommand{\out}{Y}    
\newcommand{\rbd}{r_0}  
\newcommand{\state}{x}  
\newcommand{\gain}{c_3}   
\newcommand{\condmean}{Z} 
\newcommand{\statmean}{m}
\newcommand{\z}{z}
\newcommand{\sojourn}{\sigma} 
\newcommand{\constr}[1]{(C{#1})} 
\begin{document}

\title{Poisson channel with binary Markov input and average sojourn time constraint\\
\thanks{The work was supported by the European Research Council (ERC) within the Consolidator Grant CONSYN (grant agreement no. 773196).\newline \newline
This article was accepted for publication by IEEE, ISIT 2020. \newline
doi: \href{https://ieeexplore.ieee.org/document/9174360}{10.1109/ISIT44484.2020.9174360}
\textcopyright 2020 IEEE}
}

\author{\IEEEauthorblockN{Mark Sinzger, Maximilian Gehri, Heinz Koeppl}
\IEEEauthorblockA{\textit{Dept. of Electrical Engineering, Centre for Synthetic Biology},
\textit{Technische Universit{\"a}t Darmstadt}\\
Darmstadt, Germany \\
Email: \{mark.sinzger,maximilian.gehri,heinz.koeppl\}@bcs.tu-darmstadt.de}

}

\maketitle

\begin{abstract}

A minimal model for gene expression, consisting of a switchable promoter together with the resulting messenger RNA, is equivalent to a Poisson channel with a binary Markovian input process. Determining its capacity is an optimization problem with respect to two parameters: the average sojourn times of the promoter's active (ON) and inactive (OFF) state. An expression for the mutual information is found by solving the associated filtering problem analytically on the level of distributions. For fixed peak power, three bandwidth-like constraints are imposed by lower-bounding (i) the average sojourn times (ii) the autocorrelation time and (iii) the average time until a transition.
OFF-favoring optima are found for all three constraints, as commonly encountered for the Poisson channel. In addition, constraint (i) exhibits a region that favors the ON state, and (iii) shows ON-favoring local optima. 
\end{abstract}

\begin{IEEEkeywords}
Poisson channel, gene expression, binary Markov, average sojourn time, filtering, bandwidth constraint
\end{IEEEkeywords}

\section{Introduction}

There is mounting evidence that the information encoded in the temporal concentration profiles of biomolecules plays a key role in cellular sensing and decision making \cite{PURVIS2013945, friedrich2019stochastic} 
and helps to overcome biochemical noise \cite{Selimkhanov1370}. The computation of mutual information (MI) between time-varying, biomolecular signals is complex, and analytical solutions so far relied on Gaussian \cite{TenWoldeInOut2009} or steady-state approximations \cite{MuglerWalczakWiggins2009}. Other papers have focused on single time-point transmission, e.g., \cite{suderman2017fundamental,tkavcik2008information}.  Works, that account for the discrete nature of chemical reactions, commonly assume diffusion approximations as inputs \cite{lestas2010fundamental,Nakahira2018} or are based on stochastic simulation \cite{cepedahumerez2018estimating,duso2019path,PashaSolo2012particlefilter}. Restrictions to sub-classes of discrete-state input processes permit analytical bounds on the capacity \cite{shamai1990pam}, often challenging diffusion based results \cite{parag2019signalling}. 

This paper analyzes the minimal gene expression model with a two-state promoter (see e.g. \cite{PeskinGeneExpressionModel} and Fig. 2a in \cite{ZechnerKoeppl2014}) as an analytically tractable example.
Switching stochastically between activated state $x_1$ and inactivated state $x_2$, the promoter is modeled by a stationary random telegraph process $\msg(t)$, i.e., a binary, time-homogeneous, stationary Markov process (BMP). Its state linearly modulates the synthesis rate of messenger RNA molecules $\out(t)$. The decay of mRNA molecules can be ignored from an information theoretic point of view, because birth events are uniquely identified from birth-death trajectories \cite{duso2019path}.
The joint distribution of $(\msg, \out)$, that factorizes in the conditional $\out \vert \msg$ and the input path distribution $\mu_\msg$, is equivalent to the Poisson channel, whose class of input processes $\msg(t)$ is restricted to BMPs.
We distinguish between channels with leakage ($\state_1 > 0$) and without leakage ($\state_1 = 0$), see Fig. \ref{fig:cond_mean}a. By $\msg_{[0,T]}$ we denote the trajectory $\msg(t)_{0 \le t \le T}$ of the time-varying input signal with transmission duration $T$. The sojourn times $\sojourn_{1}$ in $\state_1$ and $\sojourn_{2}$ in $\state_2$ are exponentially distributed with parameters $c_1$ and $c_2$ such that
$\E[\sojourn_{1}] = c_{1}^{-1}$ and $\E[\sojourn_{2}] = c_{2}^{-1}$.
The channel output $\out(t)$ fires at rate $c_3\msg(t)$, where $c_3$ is the channel gain that dictates the time scale of $\out(t)$. We denote the jump times of $\out$ as $t_i$ with $0 < t_1 < \cdots < T$.

The Poisson channel was introduced as a model for direct-detection optical communication systems \cite{MacchiPicinbono}. Classically, peak and average power are constrained. Then among general inputs the class of BMPs achieves capacity, however at the cost of infinite switching rates $c_1, c_2$  \cite{kabanov1978capacity,davis1980capacity}. This physical implausibility motivates bandwidth-like constraints \cite{snyder1983design,shamai_lapidoth1993poisson}. By restricting general signals to binary inputs with lower-bounded sojourn times, \cite{shamai1991capacity} reported a transition from asymmetric to symmetric allocation for the capacity-achieving input, as the lower bound increases.

\subsection{Problem statement and outline}
We consider the Poisson channel with BMPs as input class and investigate the following bandwidth-like constraints:
\begin{itemize}
\item (C1) $0 <c_1 \le r_1, 0<c_2 \le r_2$, i.e., lower-bounding the average sojourn times by $\E[\sojourn_{1}] \geq r_1^{-1}, \E[\sojourn_{2}] \geq r_2^{-1}$. A special case is the homogeneous constraint $\max\{ c_1, c_2 \} \le r_0$, in analogy to \cite{shamai1991capacity}.
    \item (C2) $0 <c_1 + c_2 \le r_0$. Since $\Cov[X(t),X(t+s)] \propto \exp(-(c_1 + c_2) s)$, this bounds the autocorrelation time of $\msg(t)$ from below.
    \item (C3) $\E[\E[\sojourn_{\msg(t)}| \msg(t)]]=c_1^{-1} + c_2^{-1} - 2/(c_1 + c_2) \geq r_0^{-1}$, which lower-bounds the average sojourn time, similarly to (C1), but regardless of the transition type. 
    \end{itemize}

As the biophysical interpretation and value of the rates $c_1,c_2$ strongly depends on the context of the promoter model \cite{ZechnerKoeppl2014,TranscrByNumbers1,suter2011mammalian,Mirny22534} we list some generic motivations. The timescales of the rates may be determined by binding affinity to the promoter, temperature, diffusion \cite{Tkacik2011Review,Bialek2005Physicallimits}, and availability of activating and deactivating constituents, such as transcription factors, polymerases, other enzymes, or ATP. Highly autocorrelated dynamics may be caused by upstream regulation or feedback of downstream elements in a reaction network \cite{lestas2010fundamental,KIM20111167}. 

We consider the path-wise MI $\I(\msg_{[0,T]}, \out_{[0,T]})$ \cite{liptser2001statistics} and the information rate, defined as
$$\bar{\I}(\msg, \out):= \lim_{T \to \infty} \frac 1 T \I(\msg_{[0,T]}, \out_{[0,T]}).$$
Optimizing (i) the MI and (ii) the information rate with respect to all admissible 
input path distributions $\mu_\msg$ yields (i) the capacity $C_T$ and (ii) the information rate capacity $C$. For fixed OFF and ON states $\state_1, \state_2$, the distribution $\mu_\msg$ is parametrized solely by the system parameters $c_1, c_2 \geq 0$. Fixing $c_3$, the respective capacity-achieving path distributions are characterized by the following constrained optimization problems
$$ \mathrm{(i)} \; \max_{c_1, c_2} \,\frac{1}{T}\I(\msg_{[0,T]}, \out_{[0,T]}), \quad \mathrm{(ii)} \; \max_{c_1, c_2} \, \bar{\I}(\msg, \out)$$ subject to one of the constraints (C1) to (C3).

The remainder of the work is organized as follows. Exploiting the link between MI and filtering,
section II addresses the associated filtering problem \cite{snyder1991random} analytically on the level of distributions. We recognize the conditional mean as a piece-wise deterministic Markov process \cite{davis1984piecewise}, and solve the partial differential equation (PDE) given by the hybrid generator \cite{gardiner2009stochastic}. In section III we express the MI and information rate as nonelementary Riemann integrals related to hypergeometric series. This permits further analysis of the constraints (C1) to (C3). 
Capacity-achieving input was reported so far as asymmetric, favoring the OFF state \cite{kabanov1978capacity, davis1980capacity}. In contrast, we find that constraint (C1) either enforces a symmetric allocation of ON and OFF state at optimality for certain $r_1, r_2$ or favors the ON state. ON-favoring local optima are implied by (C3).

\subsection{Previous Work}
\label{sub:previous}
The following expression due to \cite{liptser2001statistics} links the path-wise MI with the filtering problem of observing the input $\msg$ indirectly via the Poisson channel output $\out$. Accordingly,
\begin{equation}
    \I(\msg_{[0,T]}, \out_{[0,T]})  =  \int_0^T \E[\phi(\gain\msg(t))] - \E[\phi(\gain\condmean(t))] \dd t,
    \label{eq:Liptser}
\end{equation} 
where $\phi(z) = z \ln(z)$ and the first conditional moment $\condmean(t) = \E[\msg(t) \mid  \out_{[0,t]}]$ is the optimal causal estimator of the input under a quadratic criterion.

Applying Jensen's inequality to the second integrand term of \eqref{eq:Liptser}
and using the identity $\phi(cz) = c\phi(z) + z\phi(c)$ as well as the stationarity of BMPs, the MI is bounded by
\begin{align*}
    \frac {\I(\msg_{[0,T]}, \out_{[0,T]})} {c_3 T} &\le \frac{c_2 \phi(\state_1) + c_1 \phi(\state_2)}{c_1 + c_2} - \phi\left(\state_1 + \frac{c_1 \Delta \state }{c_1 + c_2}\right)\\
    &=: J(\state_1, \state_2, c_1/(c_1 + c_2)),
\end{align*}
where $\Delta \state := \state_2 - \state_1$ is the dynamic range. Optimizing over the average power constraint $0 < c_1/(c_1 + c_2) \le p_0$ implies
\begin{equation}
    C\le
    c_3J(\state_1, \state_2, \min\{p_0, \bar{p}\}),
    \label{eq:capacity_bound}
\end{equation}
where $C$ can be replaced by $C_T$ and $$\bar{p} = \left\{\exp\left(\frac{\phi(\state_2) - \phi(\state_1)}{\Delta \state} - 1\right) - \state_1\right\}/{\Delta \state}. $$
Kabanov and Davis showed that for $c_1, c_2 \to \infty$ the bound \eqref{eq:capacity_bound} is indeed achieved with the respective asymptotic ratio $c_1/(c_1 + c_2) = \min\{p_0, \bar{p}\}$ \cite{kabanov1978capacity, davis1980capacity}.
The case $\state_1 = 0, \state_2 = 1$ (no leakage) with $p_0  \geq 1/e$ reduces to
$ C = C_T = c_3/e$ with $c_1/(c_1 + c_2) = 1/e. $

We contribute by analyzing how the capacity-achieving BMP input behaves for finite $c_1, c_2$, reflected in the bandwidth-like constraints (C1) to (C3).
For the process class of BMPs, $\condmean(t)$ in \eqref{eq:Liptser} evolves according to the ODE with stochastic jumps \cite{snyder1991random}
\begin{equation}
    \begin{split}
    \dd \condmean(t) =&\left[c_1\cdot\Delta \state - (c_1 + c_2 + c_3 \cdot \Delta \state)(\condmean(t) - \state_1) \right.\\ & \left.+ c_3(\condmean(t) - \state_1)^2 \right] \dd t
    + g(\condmean(t)) \dd \out(t),
\label{eq:filtering}
    \end{split}
\end{equation}
where $\dd \out(t) = 1$ for jump times $t= t_i$, vanishing otherwise; and $g(\condmean(t_i)) = \frac{(\condmean(t_i) - \state_1)(\state_2 - \condmean(t_i))}{\condmean(t_i)}$ is the jump height $\dd \condmean(t_i) = \condmean(t_i+) - \condmean(t_i -)$ at jump times of $\out$. Fig. \ref{fig:cond_mean}b shows a sample trajectory for $\state_1 = 0, \state_2 = 1$. 

\section{The solution of the filtering problem}
In the following, we solve for the distribution of $\condmean(t)$, provided that leakage is absent ($\state_1 = 0$) and the dynamic range is normalized ($\state_2 = 1$). This self-noise limited case allows for an analytic expression of the MI in \ref{sub:MI}.
\begin{figure}
  \centering
    \includegraphics[width=0.49\textwidth]{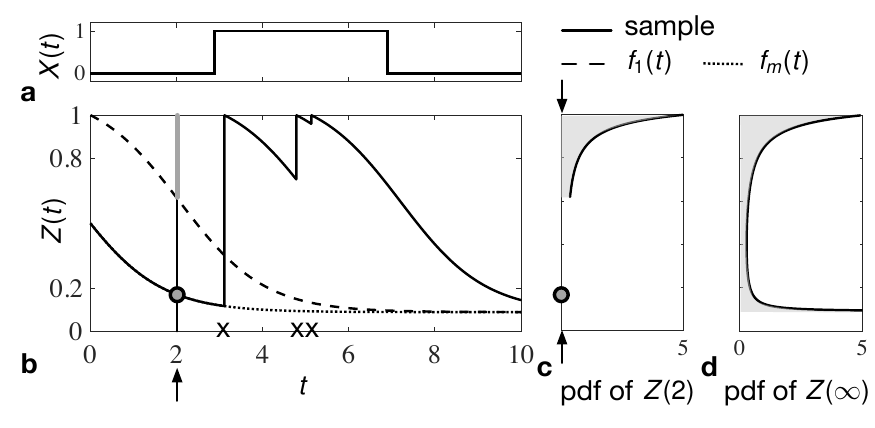}
  \caption{Dynamics of the causal estimator $\condmean(t)$ for $c_1 = c_2 = 0.1, c_3 = 1, \state_1 = 0, \state_2 = 1$. \textbf{a} Sample trajectory of the binary signal input $\msg(t)$. \textbf{b} The solid line is a sample trajectory of $\condmean(t)$. Crosses indicate the jump times $t_1, t_2, t_3$ of $\out$. The dotted line $f_m(t)$ is the common trajectory of $\condmean(t)$ prior to the first jump. The dashed line $f_1(t)$ separates the $(t,z)$-plane. Prior to the first jump the trajectories evolve below, afterwards they evolve above. \textbf{c} The probability distribution of the causal estimator at time $t = 2$ is composed of a Dirac measure with weight $\kappa(2)$ at $f_m(2)$ and a density supported on $(f_1(2), 1]$. \textbf{d} Probability density of the asymptotic causal estimator.}
    \label{fig:cond_mean}
\end{figure}

\subsection{The probability evolution equation}
In absence of leakage the stochastic reset condition in \eqref{eq:filtering} simplifies, because $\out(t)$ increases solely if $\msg(t)$ is in the ON state. Hence, $\condmean(t)$ is reset to $\state_2 = 1$ upon jumps of $\out$.
The joint process $\{\condmean(t), \out(t)\}_{t}$ is then a piece-wise deterministic Markov process \cite{davis1984piecewise} that jumps stochastically from state $\{\condmean(t_i-), \out(t_i-)\}$ to state $\{\condmean(t_i +), \out(t_i +)\} = (1, \out(t_i-) + 1)$. Hence, jump times $t_i$ of $\out$ and of $\condmean$ are identical, and jumps occur with propensity $c_3\condmean(t)$ \cite{snyder1991random}. Since the propensity depends only on the first component, the projection onto $\condmean(t)$ is a piece-wise deterministic Markov process itself. Its probability evolution equation is given by a hybrid generator, composed of the drift (Liouville) and the jump (Poisson) part \cite{gardiner2009stochastic}:
\begin{equation}
   \frac{\partial}{\partial t} p(t,\z) = -\frac{\partial}{\partial \z}\{A(\z)p(t,\z)\} - c_3 \z p(t,\z), \; \omega < \z < 1 , 
   \label{eq:hybrid}
\end{equation}
where
$A(\z) := c_3(c_1 \Delta x/c_3 - \gamma \z + \z^2)$ with $\gamma := (c_1 + c_2 + \Delta x c_3)/c_3$ is the drift dynamics, i.e., the ODE part of \eqref{eq:filtering} with stable equilibrium  $\omega := \frac{\gamma -\rho}{2}$, where $\rho := \sqrt{\gamma^2 - 4c_1\Delta x/c_3}$. The reason why \eqref{eq:hybrid} is lacking an inflow of probability due to jumps is because all inflow enters at $\z = 1$. This is reflected in the proof of the subsequent theorem \ref{theo:prob_evol}. Although expression \eqref{eq:Liptser} has been used and extended \cite{guo2008mutual} in multiple ways to address the capacity problem, to the best of our knowledge, the method introduced here is new to the field. 

\subsection{Transient and asymptotic distribution}
Aiming for the distribution of $\condmean(t)$, we envision the stochastic ensemble of trajectories. All trajectories are initiated in the stationary mean $\statmean :=c_1/(c_1+c_2)$, because $\condmean(0) = \E[\msg(0)]$. They slide down according to the function
$$
f_\iota(t) = \omega + \frac{\rho(\iota-\omega)}{\iota-\omega + e^{\rho c_3 t} (\rho + \omega -\iota)}
$$
with $\iota = m$ (see Fig. \ref{fig:cond_mean}b), which solves the Riccati equation $\frac{\mathrm{d}}{\mathrm{d} t} \condmean(t) =  A(\condmean(t))$.
The solution curve $f_1(t)$ with initial value $\z = 1$ at $t = 0$ separates the $(t,\z)$-plane into two regions summarized in the equivalence that involves the first jump time $t_1$ of $\condmean$:
\begin{equation}
    t < t_1 \Leftrightarrow Z(t) \le f_1(t) \Leftrightarrow \condmean(t) = f_m(t).
    \label{eq:f1(t)}
\end{equation}
Fig. \ref{fig:cond_mean}c visualizes the ensemble of trajectories stopped at a fixed $t$, while Fig. \ref{fig:cond_mean}d visualizes the asymptotic probability distribution. The following theorem fully describes the distribution of $\condmean(t)$ at any time point $t$. Besides preparing the main result theorem \ref{theo:info}, it can be interesting in its own right in the related fields of filtering and control theory.
\begin{theorem}
\label{theo:prob_evol}
The probability measure $\mu_t\colon \mathcal{B}(\omega, 1] \to [0,1], \mu_t(B) = \Prob[Z(t) \in B]$, defined for Borel sets $B \subseteq (\omega,1]$, is a hybrid measure
\begin{equation}
    \mu_t(B) = \kappa(t)\delta_{f_m(t)}(B) + \nu_t(B),
    \label{eq:hybrid_measure}
\end{equation} 
composed of a Dirac measure $\delta_a(B) = \Id_B(a), a \in \R$
at $f_m(t)$ with weight
$$\kappa(t) = e^{-\omega c_3 t}\cdot \left(1 - \rho^{-1}(m - \omega)(1-e^{-\rho c_3 t})\right)$$ and an absolutely continuous measure $\nu_t(\mathrm{d} z) = \pi(\z)\mathrm{d} \z$ supported on $(f_1(t), 1]$ with time-independent density
\begin{equation}
\pi(\z) = \alpha(\z-\omega)^{\beta - \frac{3}{2}}(\omega+\rho - \z)^{-(\beta + \frac{3}{2})},
\label{eq:stat_distr}
\end{equation}
 where
   $$ \alpha := (1-\omega)^{\frac{3}{2} - \beta}(\omega+\rho - 1)^{\beta + \frac{3}{2}}\cdot \frac{c_3 m}{c_2}, \quad \beta := \frac{\gamma}{2\rho}. $$
\end{theorem}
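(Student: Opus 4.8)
The plan is to confirm the hybrid ansatz \eqref{eq:hybrid_measure} by inserting it into the evolution equation \eqref{eq:hybrid} and matching the atomic and absolutely continuous parts separately. A convenient first step is to factor the drift as $A(\z) = \gain(\z-\omega)(\z-(\omega+\rho))$, valid because $\omega$ and $\omega+\rho = (\gamma+\rho)/2$ are exactly the roots of $\z^2 - \gamma\z + c_1\Delta\state/\gain$; then $A<0$ on $(\omega,1)$, so the flow $f_\iota$ carries every trajectory monotonically down toward the equilibrium $\omega$, and one computes $A(1) = -c_2$.

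For the atom I would use that, before the first jump of $\out$, all trajectories coincide with $f_m(t)$, so its mass is precisely the probability of no output event up to time $t$. On this event $\out$ fires with propensity $\gain\condmean(t) = \gain f_m(t)$, so $\kappa$ obeys $\dot\kappa(t) = -\gain f_m(t)\kappa(t)$ with $\kappa(0)=1$, i.e. $\kappa(t) = \exp(-\int_0^t \gain f_m(s)\dd s)$. Substituting the closed form of $f_m$ and evaluating the integral by $u = e^{\rho\gain s}$ (a single partial-fraction step) gives the stated $\kappa(t)$. At the level of distributions the same two terms reappear: since $\partial_\z\{A(\z)\delta(\z-f_m(t))\} = A(f_m)\,\delta'(\z-f_m(t))$, matching the $\delta'$ contribution recovers the Riccati flow $\dot f_m = A(f_m)$ and the $\delta$ contribution reproduces $\dot\kappa = -\gain f_m\kappa$.

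For the absolutely continuous part I would look for a density stationary in shape, anticipating that the profile deposited by repeated resets is self-similar. Setting $\partial_t\pi = 0$ in \eqref{eq:hybrid} reduces it to the linear ODE $\frac{\mathrm{d}}{\mathrm{d}\z}\{A(\z)\pi(\z)\} = -\gain\z\,\pi(\z)$ on the open support. Writing $\gain\z/A(\z)$ in partial fractions over $\omega,\omega+\rho$ (residues $-\omega/\rho$ and $(\omega+\rho)/\rho$) and integrating $\frac{\mathrm{d}}{\mathrm{d}\z}\ln\{A(\z)\pi(\z)\} = -\gain\z/A(\z)$, then dividing by $A$, produces the power law with exponents $\omega/\rho - 1 = \beta - \tfrac{3}{2}$ and $-(\omega+\rho)/\rho - 1 = -(\beta+\tfrac{3}{2})$, using $\beta = \gamma/(2\rho) = \omega/\rho + \tfrac{1}{2}$. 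The decisive structural point is that a time-independent $\pi$ is compatible with the growing support $(f_1(t),1]$ because the lower front is itself a characteristic, $\dot f_1 = A(f_1)$ with $f_1(0)=1$: it advances with the flow and no probability crosses it, while new mass enters only at $\z=1$ through the resets.

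It remains to fix $\alpha$, which I obtain from a flux balance at the reset boundary $\z=1$. Every jump of $\out$ reinserts mass at $\z=1$ at total rate $\gain\,\E[\condmean(t)] = \gain\statmean$, using that $\condmean$ is the unbiased estimator $\E[\condmean(t)] = \E[\msg(t)] = \statmean$; this source equals the leftward drift current $-A(1)\pi(1) = c_2\,\pi(1)$, so $\pi(1) = \gain\statmean/c_2$, and solving for the prefactor returns exactly the stated $\alpha$. I expect the main obstacle to be the measure-valued bookkeeping rather than any single computation: one must justify splitting \eqref{eq:hybrid} into a transported atom plus a stationary density on an advancing support, and check that the front motion and boundary inflow are jointly consistent with conservation $\kappa(t) + \int_{f_1(t)}^1\pi(\z)\dd\z = 1$. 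Differentiating, this consistency reduces to $\gain f_m(t)\kappa(t) = -A(f_1(t))\pi(f_1(t))$, which holds at $t=0$ and then for all $t$ since total probability is conserved. Finally the degenerate data $f_1(0)=1$, $\kappa(0)=1$ collapse $\mu_0$ to $\delta_\statmean$, matching $\condmean(0)=\statmean$.
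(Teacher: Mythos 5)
Your atom argument and your boundary normalization coincide in substance with the paper's proof: the paper likewise identifies the atom's weight as the no-jump probability $\Prob[t_1>t]=\exp\bigl(-\int_0^t \gain f_m(s)\dd s\bigr)$, and its boundary condition $p(t,1)\equiv \gain m/c_2$ comes from exactly your flux identity, $\E[\gain\msg(t)]=-A(1)\,p(t,1)=c_2\,p(t,1)$ (your route via $\gain\E[\condmean(t)]=\gain m$ is the same statement by the tower property). Your factorization $A(\z)=\gain(\z-\omega)(\z-(\omega+\rho))$, the value $A(1)=-c_2$, and the exponents $\beta-\tfrac32$, $-(\beta+\tfrac32)$ from the stationary ODE are all correct.

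The genuine gap is in the absolutely continuous part: you verify that a time-independent density on the advancing support is \emph{consistent} with \eqref{eq:hybrid}, but you never show that the law of $\condmean(t)$ restricted to $(f_1(t),1]$ actually \emph{is} that measure. Two ingredients are missing. First, absolute continuity itself: before one can speak of a density, one must rule out further singular mass; the paper does this probabilistically, using that on $\{t_1\le t\}$ one has $\condmean(t)=f_1\bigl(t-\sup\{t_i\mid t_i\le t\}\bigr)$, so absolute continuity of the jump-time laws transfers to $\condmean(t)$. Second, time-independence must be derived, not posited: the paper solves the PDE by the method of characteristics emanating from the boundary $\z=1$; since the boundary datum $p(t,1)=\gain m/c_2$ is constant in $t$ and the PDE is autonomous, the value carried along each characteristic depends only on the time elapsed since it entered at $\z=1$, hence only on $\z$, which forces $p(t,\z)=\pi(\z)\Id_{(f_1(t),1]}(\z)$ and excludes any time-dependent solution. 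Your "compatibility" observation does not exclude other solutions, and you never invoke a uniqueness theorem for the forward equation of the piece-wise deterministic Markov process that could play this role. Relatedly, your final consistency check is circular: you justify $\gain f_m(t)\kappa(t)=-A(f_1(t))\pi(f_1(t))$ "since total probability is conserved," but conservation of your \emph{candidate} measure is precisely what that identity expresses; it must be verified directly (it does hold, using $m(c_1+c_2)=c_1$, $\omega(\omega+\rho)=c_1/\gain$, $2\omega+\rho=\gamma$ and $\gain(1-\omega)(\omega+\rho-1)=c_2$), or rendered unnecessary by the characteristics argument above.
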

\begin{proof}  We use \eqref{eq:f1(t)} and compute
\begin{align*}
   \Prob[\condmean(t) \in B] &= \Prob[\condmean(t) \in B, t_1 > t] + \Prob[Z(t) \in B, t_1 \le t]\\
    &= \Prob[t_1 > t] \delta_{f_m(t)}(B) + \Prob[\condmean(t) \in B \cap (f_1(t), 1]]
\end{align*}
having the form \eqref{eq:hybrid_measure}. First,
$$ \Prob[t_1 > t] = \exp\left(-\int_0^t f_m(s) \dd s \right) = \kappa(t).$$ Second, by the equivalence
$$\condmean(t) \in B \cap (f_1(t), 1] \Leftrightarrow t - \sup\{t_i \mid t_i \le t \} \in f_1^{-1}(B), t_1 \le t$$
the absolutely continuity of the $t_i$ implies that $\nu_t(B) = \Prob[\condmean(t) \in B \cap (f_1(t), 1]]$ is an absolute continuous measure, i.e., $\nu_t(\mathrm{d}\z) = p(t, \z) \mathrm{d}\z$ with some density $p(t, \z)$ supported on $(f_1(t), 1]$. Its solution is obtained by the method of characteristics \cite{evans1998partial}, initiated at the boundary condition $p_1(t):=p(t,1)$ and propagated through the rewritten linear PDE \eqref{eq:hybrid}
$$ \frac{\partial}{\partial t} p(t,\z)+ A(\z) \frac{\partial}{\partial \z} p(t,\z)= p(t,\z) \left\{ -c_3 \z -  \frac{\mathrm{d}}{\mathrm{d} \z}A(\z)\right\} .$$
It remains to evaluate $p_1(t)$.
We compute
\begin{align*}
    \E[c_3\msg(t)] &= \E \left[ \lim_{h \to 0} \frac{\Prob[\out(t+ h) - \out(t) = 1 \mid \msg_{[0,t]}]}{h}\right]\\
    &= \lim_{h \to 0} \frac{\Prob[\out(t+ h) - \out(t) = 1]}{h}\\
    &= \lim_{h \to 0} \frac{\Prob[\condmean(t+h) \in (f_1(h), 1]]}{h}\\
    &= -f_1'(t) p(t,1)= -A(1) p_1(t) = c_2 p_1(t).
\end{align*}
The absolute continuity of $\nu_t$ was used in the fourth equality. Stationarity of $\msg$ implies that $p_1(t) \equiv c_3m/c_2$ independent of $t$. Plugging this in, $p(t,z) = \pi(z)\Id_{(f_1(t), 1]}(\z)$ is obtained.
\end{proof}

\begin{theorem}
The asymptotic distribution is absolutely continuous, supported on $(\omega, 1]$ with density $\pi(\z)$ as in \eqref{eq:stat_distr}.
\label{theo:prob_asymp} 
\end{theorem}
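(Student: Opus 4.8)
The plan is to pass to the limit $t\to\infty$ in the hybrid measure $\mu_t$ delivered by Theorem~\ref{theo:prob_evol}, exploiting crucially that the density $\pi$ of its absolutely continuous part $\nu_t$ is already time-independent. Fixing an arbitrary Borel set $B\subseteq(\omega,1]$, I would split
$$ \mu_t(B) = \kappa(t)\,\delta_{f_m(t)}(B) + \int_{B\cap(f_1(t),1]}\pi(\z)\dd\z $$
and treat the Dirac and the absolutely continuous contributions separately, aiming to show that the former vanishes while the latter converges to $\int_B\pi(\z)\dd\z$.

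First I would record the asymptotics of the geometric ingredients. Writing $A(\z)=c_3(\z-\omega)(\z-\bar\omega)$ with roots $\omega<\bar\omega:=\omega+\rho$, the value $A(1)=-c_2<0$ places $1$ strictly between the roots, so $\omega<1<\bar\omega$ and every $\iota\in\{m,1\}$ lies in $(\omega,\bar\omega)$. Since then $\rho+\omega-\iota=\bar\omega-\iota>0$, the denominator in the explicit solution $f_\iota(t)$ diverges as $t\to\infty$, giving $f_\iota(t)\downarrow\omega$; in particular the support $(f_1(t),1]$ of $\nu_t$ increases up to $(\omega,1]$ and the Dirac location $f_m(t)$ escapes to the boundary $\omega$. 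Moreover $\omega>0$, because $\omega$ and $\bar\omega$ are the roots of $\z^2-\gamma\z+c_1\Delta\state/c_3$, whose product $c_1\Delta\state/c_3$ and sum $\gamma$ are positive; hence $e^{-\omega c_3 t}\to0$ while the bracket in $\kappa(t)$ stays bounded, so that $\kappa(t)\to0$.

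With these facts the limit is immediate. The Dirac term is dominated by $\kappa(t)\to0$, and monotone convergence along the increasing sets $B\cap(f_1(t),1]\uparrow B$ yields $\int_{B\cap(f_1(t),1]}\pi\to\int_B\pi$. Therefore $\mu_t(B)\to\int_B\pi(\z)\dd\z$ for every Borel $B\subseteq(\omega,1]$, which identifies the asymptotic law as the absolutely continuous measure with density $\pi$ on $(\omega,1]$. Normalization then comes for free: choosing $B=(\omega,1]$ and using $\nu_t((f_1(t),1])=1-\kappa(t)$ (as $\mu_t$ is a probability measure) gives $\int_\omega^1\pi(\z)\dd\z=\lim_{t\to\infty}(1-\kappa(t))=1$, re-confirming that the constant $\alpha$ of Theorem~\ref{theo:prob_evol} normalizes $\pi$.

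I expect no genuine obstacle, since Theorem~\ref{theo:prob_evol} has done the heavy lifting; the only care needed is the root bookkeeping guaranteeing $\omega>0$ and $f_1(t)\downarrow\omega$, so that the escaping Dirac mass and the a.c.\ support filling $(\omega,1]$ are rigorously justified, together with the observation that it is precisely the time-independence of $\pi$ that makes the limiting density coincide with the transient one. Should a stronger mode of convergence be desired, the same estimates upgrade the setwise limit to total variation, since $\mu_t$ and $\mu_\infty$ differ only by the positive Dirac mass $\kappa(t)$ and the a.c.\ mass on the shrinking interval $(\omega,f_1(t)]$, whence $\|\mu_t-\mu_\infty\|_{\mathrm{TV}}\le\kappa(t)+\int_\omega^{f_1(t)}\pi(\z)\dd\z\to0$.
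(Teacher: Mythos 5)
Your proposal is correct, but it proves the theorem by a genuinely different route than the paper. The paper's proof is two sentences: it asserts that $\condmean(t)$ is an ergodic Markov process, so the asymptotic distribution equals the stationary one, and then notes that $\pi(\z)$ satisfies the stationarity condition obtained by setting the right-hand side of \eqref{eq:hybrid} to zero. You instead bypass ergodicity entirely and pass to the limit $t\to\infty$ directly in the transient hybrid measure of Theorem~\ref{theo:prob_evol}: the root bookkeeping ($\omega>0$ and $\omega<1<\omega+\rho$, both of which check out since the roots of $\z^2-\gamma\z+c_1\Delta\state/c_3$ have positive sum and product, and $A(1)=-c_2<0$) gives $\kappa(t)\to0$ and $(f_1(t),1]\uparrow(\omega,1]$, and the time-independence of $\pi$ then forces the limit law to be $\pi(\z)\dd\z$ by monotone convergence, even in total variation. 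What each approach buys: the paper's argument is a one-liner given standard Markov-process theory, but it leans on an ergodicity claim it does not establish and requires separately verifying that $\pi$ solves the stationary equation; yours is self-contained, constructive, delivers the normalization $\int_\omega^1\pi(\z)\dd\z=1$ and an explicit (exponential-rate) total-variation convergence as free byproducts, at the cost of some elementary asymptotic bookkeeping. One small gloss in your write-up: membership of $m$ in $(\omega,\omega+\rho)$ does not follow from $A(1)<0$ as your phrasing suggests, but from $A(m)=-c_3c_1c_2/(c_1+c_2)^2<0$; this is harmless anyway, since your argument only needs $\kappa(t)\to0$ and the location of the vanishing Dirac mass is irrelevant.
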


\begin{proof}
Since $\condmean(t)$ is an ergodic Markov process, the asymptotic equals the stationary distribution. The distribution $\pi(z)$ satisfies the stationarity condition obtained from \eqref{eq:hybrid} by equating the right side to zero.
\end{proof}

\section{The information surface}
\label{sec:info_surface}
We consider the Poisson channel without leakage, unless mentioned otherwise. 
\subsection{Analytic expression}
\label{sub:MI}
The results of the previous section allow us to state the main result of this paper.
\begin{theorem}
\label{theo:info}
Let $\state_1 = 0, \state_2 = 1$, then
\begin{align*}
   \frac 1 T \I(\msg_{[0,T]}, \out_{[0,T]}) =& -\frac {c_3}{T} \int_0^T \phi(f_m(t))\kappa(t) \dd t\\
   &- c_3\int_\omega^1 \phi(\z) \pi(\z) (1-f_1^{-1}(\z)/T) \dd \z,\\
   \bar{\I}(\msg, \out) =& - c_3\int_\omega^1 \phi(\z) \pi(\z) \dd \z.
\end{align*}
\end{theorem}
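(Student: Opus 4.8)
The plan is to start from the Liptser identity \eqref{eq:Liptser} and specialize it to the self-noise limited case $\state_1 = 0, \state_2 = 1$, evaluating the two integrand terms separately. Since $\msg(t) \in \{0,1\}$ and $\phi(0)=0$, stationarity of the BMP gives $\E[\phi(c_3\msg(t))] = \Prob[\msg(t)=1]\,\phi(c_3) = \statmean\, c_3 \ln c_3$. For the second term I would apply the identity $\phi(c_3\z) = c_3\phi(\z) + \z\phi(c_3)$ pointwise to $\condmean(t)$ and take expectations, using the tower property together with stationarity in the form $\E[\condmean(t)] = \E[\msg(t)] = \statmean$ to obtain $\E[\phi(c_3\condmean(t))] = c_3\E[\phi(\condmean(t))] + \statmean\, c_3\ln c_3$. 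The two $\statmean\, c_3\ln c_3$ contributions cancel, collapsing the integrand to $-c_3\E[\phi(\condmean(t))]$ and hence $\I(\msg_{[0,T]},\out_{[0,T]}) = -c_3\int_0^T \E[\phi(\condmean(t))]\dd t$.

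The next step is to evaluate $\E[\phi(\condmean(t))]$ against the hybrid law $\mu_t$ supplied by Theorem \ref{theo:prob_evol}. Integrating $\phi$ against the Dirac part and the absolutely continuous part separately yields $\E[\phi(\condmean(t))] = \kappa(t)\phi(f_m(t)) + \int_{f_1(t)}^1 \phi(\z)\pi(\z)\dd\z$. Substituting and dividing by $T$ reproduces the first term of the claimed MI expression verbatim, so the remaining work is to rewrite the double integral $\frac{c_3}{T}\int_0^T \int_{f_1(t)}^1 \phi(\z)\pi(\z)\dd\z\,\dd t$.

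Here the key manoeuvre is Fubini's theorem. The region of integration in the $(t,\z)$-plane is $\{0\le t \le T,\ f_1(t)\le \z \le 1\}$, and since $f_1$ is strictly decreasing with $f_1(0)=1$, the condition $f_1(t)\le \z$ is equivalent to $t \ge f_1^{-1}(\z)$. Swapping the order of integration therefore replaces the inner $t$-integral by the length $T - f_1^{-1}(\z)$ of the admissible slab, which after division by $T$ produces exactly the weight $1 - f_1^{-1}(\z)/T$ appearing in the statement. I expect this change of order to be the main technical obstacle: one must track the monotone inverse $f_1^{-1}$ and the effective support, which is $(f_1(T),1]$ and only tends to $(\omega,1]$ as $T\to\infty$. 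On that slab the weight is nonnegative; outside it the weight is set to zero, so the integral may equivalently be written over $(\omega,1]$ as in the statement.

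Finally, for the information rate I would let $T\to\infty$. Because $\omega>0$ whenever $c_1>0$, the weight $\kappa(t)$ decays exponentially and is integrable, so the bounded first term $\frac{c_3}{T}\int_0^T \phi(f_m(t))\kappa(t)\dd t$ vanishes in the limit; in the second term $f_1(T)\downarrow\omega$ and $f_1^{-1}(\z)/T\to 0$ pointwise, so dominated convergence gives $\bar{\I}(\msg,\out) = -c_3\int_\omega^1 \phi(\z)\pi(\z)\dd\z$. Alternatively, this last identity follows directly from the ergodicity and stationary density of Theorem \ref{theo:prob_asymp}, since the Cesàro average of $\E[\phi(\condmean(t))]$ converges to its stationary value $\int_\omega^1 \phi(\z)\pi(\z)\dd\z$.
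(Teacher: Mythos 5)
Your proposal is correct and takes essentially the same route as the paper, whose entire proof is the one-line instruction to evaluate \eqref{eq:Liptser} using Theorems \ref{theo:prob_evol} and \ref{theo:prob_asymp}, the identity $\phi(cz) = c\phi(z) + z\phi(c)$, and Fubini --- you have simply supplied the details (cancellation of the $\statmean c_3\ln c_3$ terms, integration against the hybrid measure, the swap of integration order giving the weight $1-f_1^{-1}(\z)/T$, and dominated convergence for the $T\to\infty$ limit). Your observation that the exact-$T$ formula really lives on the slab $(f_1(T),1]$, with the weight read as zero where $f_1^{-1}(\z)>T$, is the correct reading of the theorem's integral over $(\omega,1]$ and is a point the paper's terse proof leaves implicit.
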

\begin{proof}
Using theorems \ref{theo:prob_evol},\ref{theo:prob_asymp}, the identity $\phi(cz) = c\phi(z) + z\phi(c)$, and Fubini, evaluate \eqref{eq:Liptser}.
\end{proof}
We use a linear transformation of the integration variable $\z \to (1-\z)(1-\omega)^{-1}$ and the series expansion of the logarithm as well as $z \mapsto (\rho + \omega -z)^{-\beta -3/2}$. Then by uniform convergence of the integrand, we may express
\begin{equation}
    \bar{\I}(\msg, \out) = \sum_{i,j=0}^\infty A_{ij} \left(\frac{1-\omega}{\rho}\right)^j (1-\omega)^{i}\, ,
    \label{eq:series}
\end{equation}
as an absolutely convergent series with coefficients
\begin{align*}A_{ij} &=  \frac{c_3\alpha\rho^{-\beta - \frac{3}{2}} (1- \omega)^{\beta+ \frac{1}{2}}}{\Gamma (\beta + \frac{3}{2})}\times \\
&\frac{i! \Gamma( \beta + \frac{3}{2} +j ) \Gamma( \beta - \frac{1}{2} +j) (\beta - \frac{1}{2} +j  + \omega(i+2))}{j! \Gamma(\beta + \frac{5}{2} +i +j) },
\end{align*}
establishing a link with Appell's hypergeometric $F_3$ series \cite{zbMATH02708199}.
Using \eqref{eq:series}, it can be verified that for $c_1,c_2 \to \infty$ with asymptotic ratio $c_1/(c_1 + c_2) \to p_0$, it holds that $\beta \to 1/2, \omega \to p_0$ and $(1-\omega)/\rho \to 0$. Hence $\bar{\I}(\msg, \out) \to c_3J(0,1,p_0)$, in accordance with paragraph \ref{sub:previous}. Expanding $f_1^{-1}(z)$, the MI $\I(\msg_{[0,T]}, \out_{[0,T]})$ is given in terms of Srivastava's triple hypergeometric series \cite{srivastava1985multiple} by analogous proceeding, implying
$$ \frac 1 T \I(\msg_{[0,T]}, \out_{[0,T]}) \to c_3J(0,1,p_0) $$ for $c_1,c_2 \to \infty$ with asymptotic ratio $c_1/(c_1 + c_2) \to p_0$.

The above linear transformation and subsequent differentiation under the integral sign or, alternatively, summand-wise differentiation of \eqref{eq:series} make (higher-order) partial derivatives with respect to $c_1, c_2$ accessible for the phase plane analysis. 

We introduce the relative rates $\tilde{c}_1 = c_1/c_3$ and $\tilde{c}_2 = c_2/c_3$. The parameters $\gamma, \omega, \rho$ and hence the stationary density $\pi(\z)$ only depend on $\tilde{c}_1$ and $\tilde{c}_2$, and $c_3$ scales the time. The linear time scaling $\tilde{t} = c_3t, \tilde{T} = c_3T$ allows us to write the MI as
$$ \frac 1 T \I(\msg_{[0,T]}, \out_{[0,T]}) = c_3 \cdot \frac{1}{\tilde{T}} \I(\tilde{\msg}_{[0,\tilde{T}]}, \tilde{\out}_{[0,\tilde{T}]}),$$ where $\tilde{\msg}(t)$ and $\tilde{\out}(t)$ are input and output of the correspondent channel with relative rates $\tilde{c}_1,\tilde{c}_2$ and normalized $\tilde{c}_3 = 1$. For convenience we drop the tilde in the following and assume a normalized time scale $c_3 = 1$ for the channel.

\subsection{Constraint (C1) permits ON-favoring input}
 The optimization of $\bar{\I}(\msg, \out) = \bar{\I}(c_1, c_2)$ on the rectangular domain $0 < c_1 \le r_1, 0 < c_2 \le r_2$ in the $(c_1, c_2)$-plane varies qualitatively, depending on the location of the corner $(r_1, r_2)$. As depicted in Fig. \ref{fig:Phase_plane}, we distinguish three regions A, B and C, which are separated by the two nullclines $\partial_1 \bar{\I} = 0$ and $\partial_2 \bar{\I}= 0$. The following analysis relies on the conjecture that both nullclines do not intersect except in the origin.
\begin{figure}
  \centering
    \includegraphics[width=0.4\textwidth]{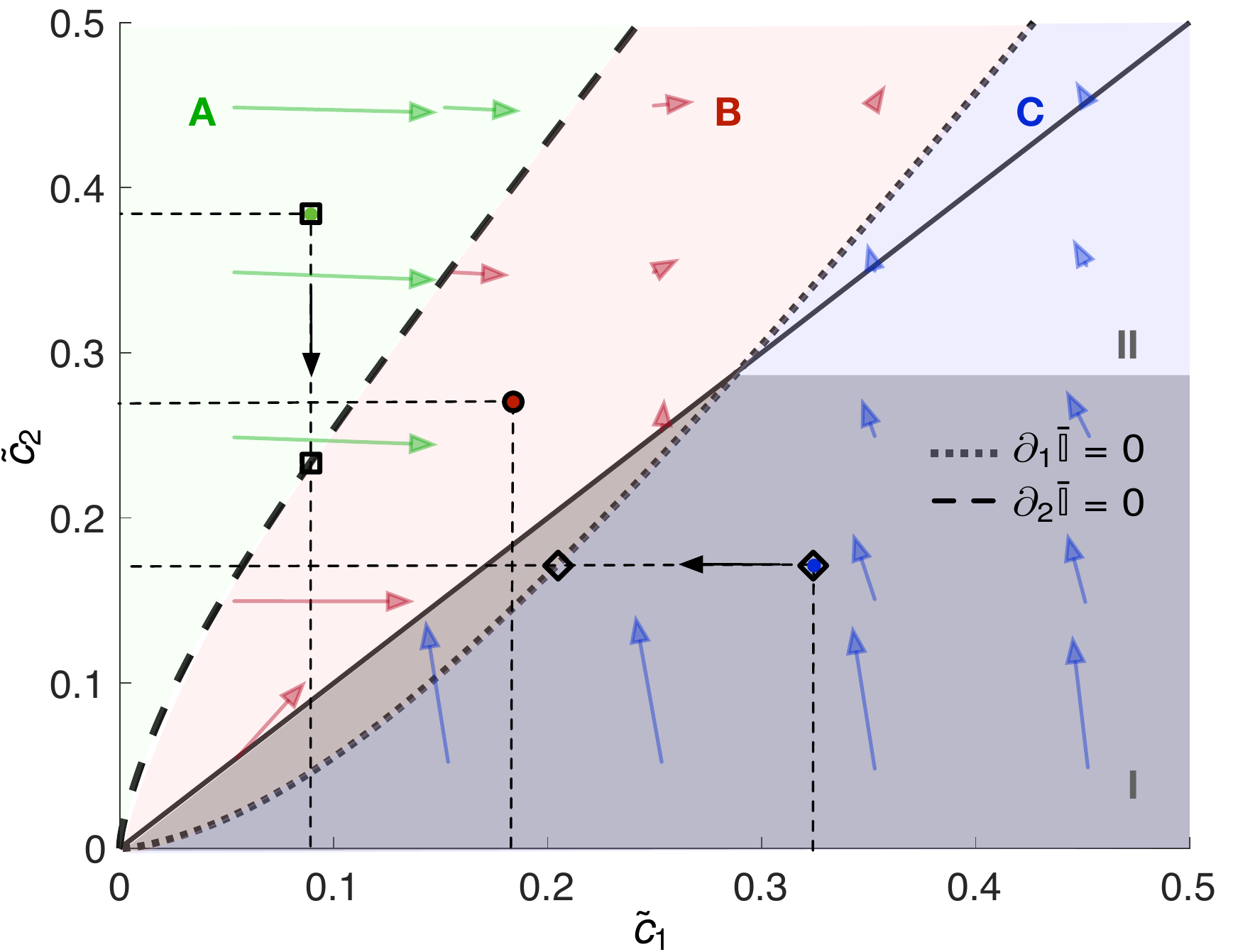}
  \caption{Phase plane analysis of (C1) for $x_1 = 0, x_2 = 1$. Nullclines $\partial_1 \bar{\I} = 0$ and $\partial_2 \bar{\I} = 0$ were evaluated, using the first $100\times 100$ summands of the $F_3$ series in \eqref{eq:series} and deriving summand-wise. Colored arrows indicate the gradient of the information rate, calculated alike. Optimization domains are rectangular. Depending on the location of the domain's upper right corner $(r_1, r_2)$, the optimum is assumed on the nullclines $\partial_1 \bar{\I} = 0$ [$(r_1, r_2)$ in C] and $\partial_2 \bar{\I} = 0$ [$(r_1, r_2)$ in A], respectively, or in the interior of region B [$(r_1, r_2)$ in B]. Regions I and II contain all $(r_1, r_2)$, whose optima favor the ON and OFF state, respectively.}
    \label{fig:Phase_plane}
\end{figure}
Then the three regions are characterized by the signs of the partial derivatives $[\sgn(\partial_1  \bar{\I}), \sgn(\partial_2 \bar{\I})]$, taking on the values $[1, -1], [1,1], [-1,1]$ on A, B, C, respectively. The maximum $(c_1^\ast, c_2^\ast)$ satisfies $c_1^\ast = r_1,c_2^\ast < r_2$ in region A, $c_1^\ast = r_1,c_2^\ast = r_2$ in region B and $c_1^\ast < r_1,c_2^\ast = r_2$ in region C. It is hence located in B or its boundary in all cases. The ratio $\chi := c_1^\ast/ c_2^\ast$ reflects whether the optimal input process $\msg(t)$ favors the ON state ($\chi > 1$) or the OFF state ($\chi < 1$). For large $r_1, r_2$, paragraph \ref{sub:previous} implies that the ratio $\chi$ approaches $1/(e-1)$. The asymmetry of the region B with respect to reflection at the bisection line causes a smaller region I favoring the ON state and a larger region II favoring the OFF state. If we assume a homogeneous constraint (C1) $\max\{c_1, c_2\} \le r_0$, closely related to \cite{shamai1991capacity}, the bisection line transits from region B to region C, upon increasing $r_0$. This constraint thus implies a phase transition from symmetric ($\chi = 1$) to asymmetric ($\chi < 1$) allocation of the ON and the OFF state. An analogous transit behavior for the minimal sojourn time constraint has been reported in \cite{shamai1991capacity}. Fig. \ref{fig:Phase_plane_transient} shows the phase plane associated with the MI, i.e., when $T$ is finite, for comparison with the results obtained for $\bar{\I}(\msg, \out)$.

\subsection{Leakage does not alter the qualitative behavior}
For a channel with leakage ($\state_1 > 0$) the analogous evolution equation $\eqref{eq:hybrid}$ reads
\begin{align*}
    \frac{\partial}{\partial t} p(t,\z) = &-\frac{\partial}{\partial \z}\{A(\z-\state_1) p(t,\z)\} - c_3\z p(t,\z) \\ &+ c_3f_-(\z)p(t,f_-(\z)) \Id(f_-(\z) > \tilde{\omega}), \tilde{\omega} < \z < \state_2,
\end{align*}
where $\tilde{\omega} := \state_1 + \omega$ and $f_-(\z) = \frac{\state_1 \state_2}{\state_1 + \state_2 - \z}$ is the value of $\condmean(t_i-)$ that jumps to $\condmean(t_i +) = \z$.
The delay term $p(t, f_-(\z))$ in the $\z$-component turns it into an equation difficult to solve compared to \eqref{eq:hybrid}. It is yet unclear, whether the asymptotic density $\pi(\z)$ can be solved for, not to mention the time-evolving probability distribution. The first choice of technique for $\pi(\z)$, method of steps \cite{driver2012ordinary}, failed here, because the boundary value $\pi(f^{-1}_-(\tilde{\omega}))$ of the first interval $(\tilde{\omega}, f^{-1}_-(\tilde{\omega})]$ is yet unknown. Furthermore the successive intervals $(f^{-i}_-(\tilde{\omega}), f^{-(i+1)}_-(\tilde{\omega})], i \in \N_0$, where $f^{0}_-(\tilde{\omega}) := \tilde{\omega}, f^{-(i+1)}_- := f^{-1}_- \circ f^{-i}_-$, accumulate, because the sequence $f^{-i}_-(\tilde{\omega})_i$ monotonically approaches $\state_2$.

\begin{figure}
  \centering
    \includegraphics[width=0.48\textwidth]{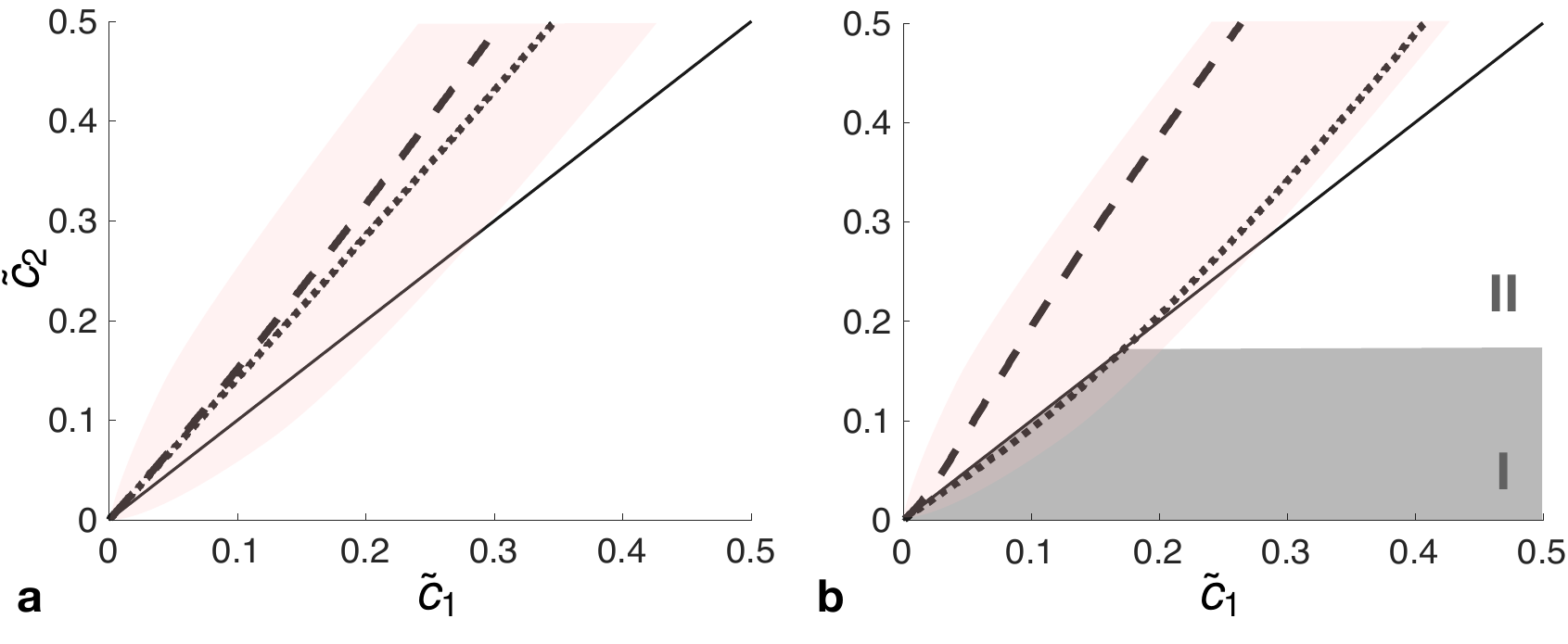}
  \caption{Phase planes for finite $T$. \textbf{a} $T = 1$ \textbf{b} $T = 5$. Dotted and dashed lines indicate $\partial_1 \I = 0$ and $\partial_2 \I = 0$, respectively, and were obtained from numerically evaluating the integral in theorem \ref{theo:info} on a grid with mesh $0.001$. The red shaded area indicates region B of $\bar{\I}$, see Fig. \ref{fig:Phase_plane}, i.e., the case $T \to \infty$.}
    \label{fig:Phase_plane_transient}
\end{figure}

Plots in Fig. \ref{fig:Phase_plane2} show results obtained from stochastic simulation of $\condmean(t)$ with sample size $2\cdot 10^6$ and Monte Carlo evaluation of the mean in \eqref{eq:Liptser}. While the qualitative behavior is not altered, leakage augments the asymptotic ratio from $\chi = 1/(e-1)$ (for $\state_1 = 0$) to $\chi = 1$ (for $\state_1 = 1$) if $\state_2 = 1$ is fixed, according to paragraph \ref{sub:previous}. Increasing leakage thus bends region B towards the bisection line and enlarges the ON-favoring region I. 

\subsection{Constraint \constr{2} enforces asymmetric allocation}
Independent of the input states $\state_1,\state_2$, the autocorrelation time of $\msg$ is $(c_1+c_2)^{-1}$. The optimum of $\bar{\I}$ constrained to the triangular area \constr{2} $c_1+c_2 \leq \rbd$ is achieved on the diagonal boundary. The condition $\partial_{1} \bar{\I} - \partial_{2} \bar{\I}=0$ determines the parametric curve of optimal rates in the $c_1$-$c_2$-plane. Hence it must lie in the region enclosed by the nullclines $\partial_1 \bar{\I} = 0$ and $\partial_2 \bar{\I} = 0$, i.e., in region B of Fig. \ref{fig:Phase_plane}. We observe strictly OFF-favoring, asymmetric, optimal allocations.  For $c_1,c_2 \to 0$ symmetric allocation becomes asymptotically optimal for large minimal autocorrelation times.

\subsection{Constraint \constr{3} allows for local optima}
 The quantity $\E[\E[\sojourn_{\msg(t)}| \msg(t)]]=c_1^{-1} + c_2^{-1} - 2/(c_1 + c_2)$ is the expectation of the average sojourn times of the states $x_1,x_2$ taking into account stationarity of $\msg(t)$.

The constraint \constr{3} allows one rate to be infinite given the other is less than $ \rbd$. Thus its relevance seems to be questionable at first glance. Combination with other constraints that strictly determine the rates to be finite, like \constr{1} and \constr{2}, legitimate \constr{3}. Nevertheless we examined \constr{3} alone.
The constraint defines a family of level sets $\E[\E[\sojourn_{\msg(t)}| \msg(t)]]=\rbd^{-1}$ in the $(c_1, c_2)$-plane, parametrized by $r_0$. The level set for each $r_0$ is a smooth curve, symmetric with respect to the bisection line. Regardless of $r_0$, the maximum of $\bar{\I}$ on each level set is assumed in the OFF-favoring region, i.e., above the bisection line. As $r_0$ decreases, at value $r_0 \approx 0.044$, a local maximum of $\bar{\I}$ appears in the ON-favoring region and perpetuates for smaller $r_0$, resembling a saddle-node bifurcation. Consequently, for average intertransition times greater than approximately $22.7/c_3$ high information throughput can be achieved by an ON-favoring rate pair. Both maxima approach the symmetric allocation as $r_0$ tends to zero.

\begin{figure}
  \centering
    \includegraphics[width=0.48\textwidth]{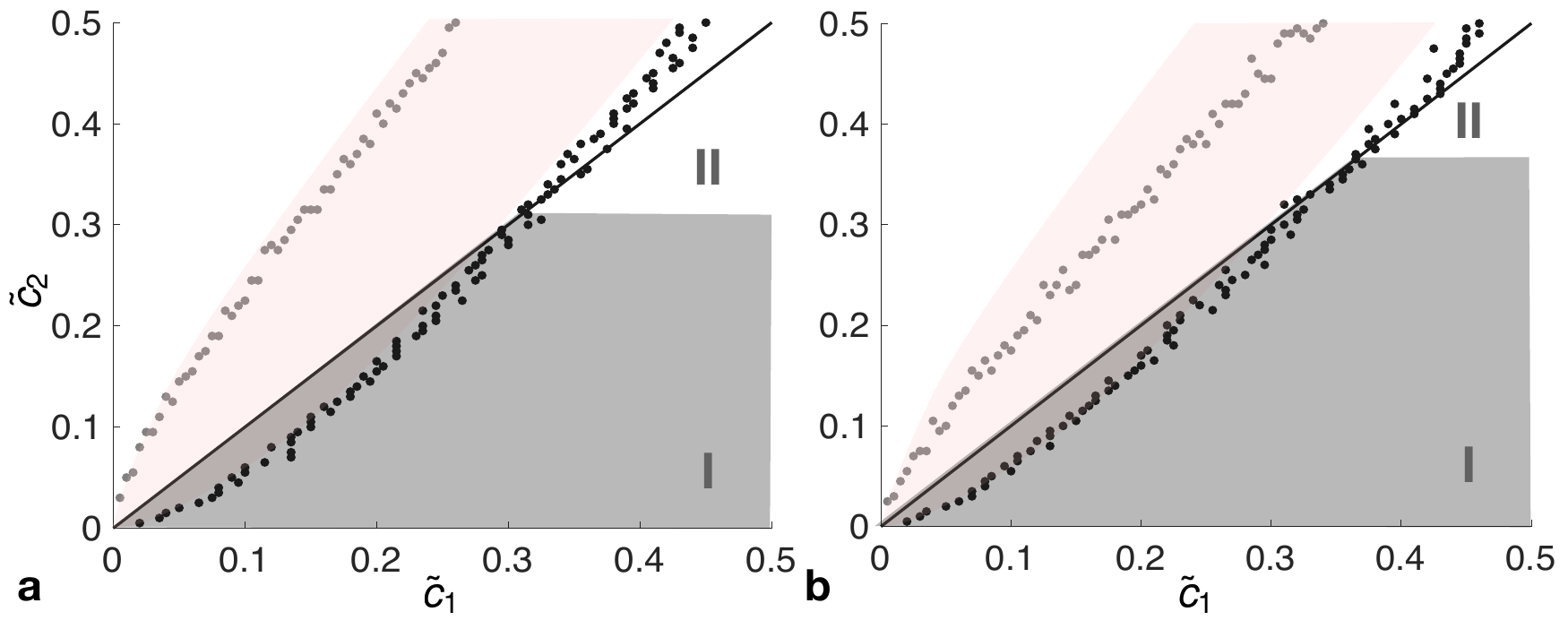}
  \caption{Phase planes with leakage. \textbf{a} $x_1 = 0.01, x_2 = 1$ \textbf{b} $x_1 = 0.1, x_2 = 1$. Black and gray marks indicate the nullclines $\partial_1 \bar{\I} = 0$ and $\partial_2 \bar{\I} = 0$, respectively, and were obtained from Monte Carlo simulations with sample size $2\cdot 10^6$. The red shaded area indicates region B of the case $x_1 = 0$.}
    \label{fig:Phase_plane2}
\end{figure}

\section{Conclusion}
We considered the Poisson channel with BMPs as input class. The method of using the hybrid generator for the evolution of the causal estimator allowed to express its distribution in closed form and represent the MI and information rate by Riemann integrals. The information surface was then analyzed for different constraints on sojourn times. The mathematical derivation heavily relied on the Markov assumption and on neglecting leakage. Advancing mathematical results on leaking, Semi-Markov and multi-state inputs remain open problems. Among general binary inputs, our result yields a lower bound on the capacity.

Optimizing the allocation of the ON and the OFF state under constraint (C1) can be intuitively explained as an interplay between different forces that maximize the efficiency and precision of signal transmission. On the one hand, a force, reducing average sojourn times is predominant in region B of Fig. \ref{fig:Phase_plane}. This force aims at increasing the amount of signals transmitted. On the other hand, forces that increase the sojourn time in the ON and OFF states are predominant in regions A and C, respectively. A larger sojourn time in the ON state increases the likelihood of observing the ON state at the channel output. A larger sojourn time in the OFF state decreases the likelihood of misinterpreting the period between consecutive channel output pulses as an input OFF phase. The phase diagram in Fig. \ref{fig:Phase_plane} explicitly quantifies how the ensemble of forces is balanced. Constraint (C3) admits an analogous driving force towards its local maximum.

Given the analyzed model's rather minimal nature it might not account for biophysical reality. Yet assuming that evolutionary strategies aim at achieving capacity \cite{Tkacik2011Review} the model supplies the hypothesis that system parameters $\tilde{c}_1, \tilde{c}_2$ are located in or near the optimal region B \cite{DeviationsOptimality}. The prediction made by this interpretation is yet to be verified by experimentalists. 

\bibliographystyle{IEEEtran}
\bibliography{bibliography}

\end{document}